\documentclass[runningheads]{llncs}
\usepackage[T1]{fontenc}
\usepackage{graphicx}
\usepackage{amsmath,amssymb,amsfonts}
\usepackage{hyperref}
\usepackage{url}
\usepackage{xcolor}
\usepackage{booktabs}
\usepackage{microtype}

\newcommand{\WPI}{\Phi}
\newcommand{\kBT}{k_B T}
\newcommand{\Erasure}{c}
\newcommand{\IntOp}{I}
\newcommand{\Sys}{\Sigma}
\newcommand{\Nop}{N}
\newcommand{\Fover}{F}
\newcommand{\Alg}{\mathcal{A}}
\newcommand{\Hw}{\mathcal{H}}
\newcommand{\Task}{\mathcal{T}}
\newcommand{\Class}{\mathcal{C}}
\newcommand{\Barrier}{\mathcal{B}}
\newcommand{\descop}{\mathrm{desc}}
\newcommand{\idle}{\mathrm{idle}}
\newcommand{\Erestore}{E_{\mathrm{restore}}^{\mathrm{cycle}}}
\newcommand{\Eadapt}{E_{\mathrm{adapt}}}
\newcommand{\poly}{\mathrm{poly}}
\newcommand{\CL}{\mathsf{CL}}
\newcommand{\ZPP}{\mathsf{ZPP}}
\newcommand{\TCone}{\mathsf{TC}^1}
\newcommand{\MutInfo}{I_{\mathrm{alg}}}
\newcommand{\Herase}{H_{\mathrm{erase}}}

\begin{document}

\title{Watts-per-Intelligence Part II:\\Algorithmic Catalysis}
\titlerunning{Watts-per-Intelligence II: Algorithmic Catalysis}
\author{Elija Perrier\inst{1}}
\authorrunning{E. Perrier}
\institute{Centre for Quantum Software and Information, University of
Technology Sydney, Australia.\\ \email{elija.perrier@gmail.com}}

\maketitle

\begin{abstract}
We develop a thermodynamic theory of algorithmic catalysis within the watts-per-intelligence framework, identifying reusable computational structures that reduce irreversible operations for a task class while satisfying bounded restoration and structural selectivity constraints. We prove that any class-specific speed-up is upper-bounded by the algorithmic mutual information between the substrate and the class descriptor, and that encoding this information incurs a minimum thermodynamic cost via Landauer erasure. Combining these results yields a coupling theorem that lower-bounds the deployment horizon required for an algorithmic catalyst to be energetically favourable. The framework is illustrated on an affine-SAT class and situates contemporary learned systems within an information–thermodynamic constraint on intelligent computation.
\keywords{Intelligence \and Catalytic Computing \and Kolmogorov Complexity \and Thermodynamics
of Computation \and Algorithmic Thermodynamics}
\end{abstract}

\section{Introduction}
\label{sec:intro}
Catalysts are central to the chemical processes that drive organic life by permitting reactions that are thermodynamically allowed, but kinetically infeasible \cite{laidler1987chemical}. In this sense, catalytic structure is one of the enabling conditions
for complex organic life and intelligence. These structures, which include enzymes, ribozymes, metal centres and mineral surfaces, make
reaction pathways available that would otherwise be unusable on
biological timescales. A transformation may be thermodynamically
permitted, in the narrow sense that its products are not forbidden by
equilibrium free energy, yet still remain practically inaccessible
because the relevant transition state is too unlikely. Catalytic
structure may make such transformations accessible via lowering activation free energy,
stabilising transition-state structure, pre-organising substrates,
coupling unfavourable steps to energetic cofactors and suppressing
competing routes~\cite{wolfenden2001depth,benkovic2003enzyme,garcia2010enzyme}. Recent work on machine intelligence \cite{ebtekar2025foundations,perrier2025watts,ebtekar2021information} has shown how thermodynamics constrains algorithmic forms of intelligence. It is therefore natural to ask whether there exist algorithmic analogues of
catalysts which may help overcome such constraints: reusable computational structures that render feasible certain algorithmically intelligent activities that would otherwise be difficult to implement. 

We answer this question in the affirmative by extending and developing algorithmic analogues of catalytic structure characterised by three properties: (a) affording a lower-energy pathway through a structured family of transitions; (b) remaining persistent so as not to be consumed by the reactions they enable; and (c) imposing a structural selection mechanism that facilitates algorithmic computation by virtue of their structure. To do so, we draw upon and synthesise elements of two primary existing bodies of work: (i) that of catalytic computation~\cite{buhrman2014computing} in which a small working machine is given access
to auxiliary memory whose contents must be restored exactly at
termination; and (ii) the thermodynamics of intelligence, specifically the watts-per-intelligence framework \cite{perrier2025watts}. Our novel contribution involves the introduction of a thermodynamic basis for the catalytic effects of computation connecting to emerging work on the thermodynamics of machine intelligence. Specifically, we contribute the following: 
\begin{enumerate}
    \item A \emph{selectivity theorem} such that in a
universal-search cost model the logarithmic class-specific speed-up
achievable by a substrate is bounded above by the algorithmic mutual
information between the substrate description and a canonical
descriptor of the task class.
\item A \emph{physical-erasure lemma},
obtained from the Zurek--Bennett correspondence between algorithmic
complexity and thermodynamic
entropy \cite{zurek1989algorithmic,bennett1982thermodynamics},
which sets out a lower bound on the
number of irreversible bit-operations a physical adaptation
process must perform on the adaptation substrate to encode information not otherwise present.
\item A \emph{coupling theorem} which proposes a lower bound on the energetic break-even horizon below which no
putative catalyst can be thermodynamically favourable on its task class.
\item A \emph{composition theorem}
for chains of catalysts which establishes that catalytic gains multiply in
rate while combining sub-additively in information, matching the
chemical intuition that enzymes along metabolic pathways compound
their rate enhancements without compounding their selectivity.
\end{enumerate}
   We demonstrate the application of these results using a worked affine-SAT example and show how the catalytic-computation model of Buhrman et~al. \cite{buhrman2014computing} is recovered using our formalism. First, we recapitulate the WPI framework for thermodynamic constraints on intelligence.

\section{Background and Related Work}
\label{sec:background}

\subsection{Watts-per-intelligence}

The watts-per-intelligence framework sets out a basis for estimating thermodynamic bounds on certain algorithmic computation associated with intelligent tasks \cite{perrier2025watts}. A computational system is represented as a triple
$\Sys=(\Alg,\Hw_{\mathrm{exec}},\Hw_{\mathrm{adapt}})$, where $\Alg$
is the algorithm, $\Hw_{\mathrm{exec}}$ is the substrate on which
deployment occurs, and $\Hw_{\mathrm{adapt}}$ is the substrate on
which training, compilation, restoration or repair occurs. The two
substrates may coincide when adaptation and deployment are performed
on the same physical device. The distinction is informative because substrate overhead factors and energy accounting may differ
substantially between the two phases. Let $k_B$ be Boltzmann's
constant, let $T$ be the ambient temperature of the computation, and
let $\Erasure:=\kBT\ln 2$ denote the Landauer cost per erased bit. Let $\Nop(\Sys,\Task)$ count the irreversible bit-operations performed
on $\Hw_{\mathrm{exec}}$ during a benchmark pass of horizon $\tau$ on
task $\Task$. Reconfiguration erasures on $\Hw_{\mathrm{adapt}}$ are
accounted separately. Each substrate carries an overhead factor
$\Fover(\Hw)\ge 1$ relative to the Landauer benchmark~\cite{perrier2025watts,frank2005introduction,athas1994low}. Applying the second law to irreversible operations on the relevant
substrate gives the substrate-level energy floor:
\begin{equation}
\label{eq:energy-floor}
E(\Sys,\Task)\;\ge\;\Fover(\Hw_{\mathrm{exec}})\,\Nop(\Sys,\Task)\,
\Erasure.
\end{equation}
A task-weighted intelligence score
$\IntOp(\Sys,\Task)=\sum_i w_i p_i(\Sys)$ with $w_i\ge 0$ and
$p_i\in[0,1]$ is fixed, corresponding to a finite-task restriction
of Legg--Hutter universal intelligence~\cite{legg_universal_2007,hutter2004universal}. The watts-per-intelligence ratio and its
Landauer floor are then
\begin{equation}
\label{eq:wpi}
\WPI(\Sys,\Task)=\frac{E(\Sys,\Task)/\tau}{\IntOp(\Sys,\Task)},
\qquad
\WPI^{\downarrow}(\Sys,\Task)
=\frac{\Fover(\Hw_{\mathrm{exec}})\Nop(\Sys,\Task)\Erasure}
{\tau\IntOp(\Sys,\Task)},
\end{equation}
with $\WPI^\downarrow\le\WPI$. 
Equation 
\eqref{eq:energy-floor} provides the basis for calculating the thermodynamic costs against which our proposed catalytic structures are compared.

\subsection{Catalytic computation}
In catalytic computation literature, a catalytic Turing machine in the sense of Buhrman et~al.~\cite{buhrman2014computing}
is a logspace machine equipped with an additional read-write
auxiliary tape of length polynomial in the input size, whose initial
contents are arbitrary and whose final contents must coincide
bit-exactly with the initial contents at termination. The class $\CL$
is defined as the set of languages decidable by such a machine in
polynomial time with logarithmic clean space. The central
structural result from ~\cite{buhrman2014computing,koucky2016catalytic} of interest to us is that log-space uniform $\TCone$ is contained in
$\CL$ and that $\CL$ is in turn contained in
$\ZPP$, with
subsequent refinements in the randomised, non-deterministic and
derandomised settings \cite{buhrman2018catalytic,girard2013catalytic,pyne2023catalytic}
and a tree-evaluation algorithm in space
$O(\log n\cdot\log\log n)$~\cite{cook2024tree,cook2025lower,goldreich2024lower}. 
This catalytic computational result shows that a very limited machine (with only log-space memory) can solve much richer problems if it is allowed to use a large auxiliary workspace, provided that workspace is returned exactly to its original state at the end. This use without consumption is what makes the model mathematically clean: the auxiliary tape can start in any configuration, so the computation must work uniformly without relying on hidden structure. However, the model is purely asymptotic and does not account for the physical cost of restoring that workspace. This is precisely the gap addressed by our generalised algorithmic catalysis framework in Section~\ref{sec:framework}.

\subsection{Related work}
While several literatures address parts of this problem, none address the question of thermodynamically-based algorithmic catalysis per se. \emph{Amortised inference} in probabilistic modelling~\cite{stuhlmuller2013amortized,kingma2013autoencoding,rezende2014stochastic} replaces repeated expensive inference with a learned procedure that is cheap to apply at deployment, which motivates the adaptation/deployment split but does not track the underlying thermodynamic cost. \emph{Algorithmic thermodynamics}, beginning with Zurek~\cite{zurek1989algorithmic} and Bennett~\cite{bennett1982thermodynamics,bennett1998information} and extended in stochastic thermodynamics~\cite{parrondo2015thermodynamics,wolpert2019stochastic}, links algorithmic complexity to physical entropy and provides the basis for bounding erasures in terms of information, as used in Section~\ref{sec:coupling}. \emph{Minimum description length} and related learning bounds~\cite{rissanen1978modeling,grunwald2007minimum,mcallester1999pacbayes} quantify how much information a model captures about a data-generating process, which parallels the structural selectivity condition here, albeit phrased in terms of distributions rather than class descriptors. Finally, work on the thermodynamics of learning~\cite{still2012thermodynamics,goldt2017stochastic} connects training-time dissipation to mutual information between learner and data in concrete physical models, providing a coarse-grained analogue of the adaptation-cost lemma in Section~\ref{sec:coupling}. 

\section{Formal Framework}
\label{sec:framework}
We now set up our formal framework. Let $U$
denote a universal prefix Turing machine. For any string or structure
$x$ admitting a canonical encoding, $K_U(x)$ denotes prefix
Kolmogorov complexity on $U$ and $K_U(x\mid y)$ denotes its
conditional counterpart. The algorithmic mutual information between
$x$ and $y$ is given by:
\begin{equation}
\label{eq:mutinfo}
\MutInfo(x:y):=K_U(x)-K_U(x\mid y),
\end{equation}
which is symmetric up to a small additive $O(\log K)$ term \cite{li2019kolmogorov,kolmogorov1965three,solomonoff1964formal,levin1984randomness}.

For a substrate $\Hw$, $\descop(\Hw)$ denotes a fixed,
self-delimiting encoding of $\Hw$ on the universal machine $U$,
i.e.\ a canonical description that fully specifies the substrate
independently of the task or its operational history. For an algorithm $\Alg$, $\idle(\Alg)$ denotes a fixed
reference (idle) state to which the system is required to return after
each cycle. Fixing it in advance prevents arbitrary final states
from being retroactively declared idle and thereby trivialising
the bounded-restoration condition. For a task class
$\Class$, $\sigma(\Class)$ denotes a canonical descriptor of the
generative or structural regularities of $\Class$: the shortest description of the information that
makes new instances recognisably members of the same class, such as
symmetries, grammars, constraints, conservation laws, dependency
structures or recurrence relations. For empirical systems, \(\sigma(\Class)\) may be specified by a benchmark or model-family description, and \(\MutInfo(\descop(\Hw):\sigma(\Class))\) may be estimated by resource-bounded MDL, compression gain, or benchmark-defined latent variables.

We can now define the algorithmic
counterparts of the three properties that characterise chemical
catalysts: pathway opening, non-consumption, and selectivity.
Pathway opening corresponds to a reduction in irreversible
bit-operations at matched intelligence, measured via
\eqref{eq:energy-floor}. Non-consumption requires that the
substrate returns close to a fixed reference state after each
cycle, with the associated restoration energy accounted for
explicitly. Selectivity becomes a condition on conditional
Kolmogorov complexity: the substrate must encode non-trivial
information about the task class, together with a transfer
requirement that rules out finite-instance memorisation. We formalise these in Definition~\ref{def:catalyst} below.  

\begin{definition}[Algorithmic speed-up and barrier]
\label{def:barrier}
For systems $\Sys_0$ and $\Sys_1$ on a task $\Task$ at matched
intelligence $\IntOp(\Sys_1,\Task)=\IntOp(\Sys_0,\Task)$, the
class-specific speed-up factor $\Gamma(\Task)$ and the logarithmic operation barrier $\Barrier(\Sys,\Task)$
are
\begin{equation}
\Gamma(\Task):=\frac{\Nop(\Sys_0,\Task)}{\Nop(\Sys_1,\Task)},
\qquad
\Barrier(\Sys,\Task):=\log_2\Nop(\Sys,\Task).
\end{equation}
\end{definition}
Definition~\ref{def:barrier} introduces two complementary quantities:
$\Gamma$ measures how much less work one system requires than another
at the same task (hence matched intelligence), while $\Barrier=\log_2\Nop$ expresses
this cost on a logarithmic scale, so that reductions in $\Barrier$
capture multiplicative improvements as the opening of lower-cost
computational pathways. With this notion of a barrier in place, we can define the algorithmic analogue of catalysis: a reusable computational structure that lowers
the effective computational barrier across a task class, thereby
opening pathways that would otherwise be infeasible under the same
resource constraints, while remaining available for repeated use.

\begin{definition}[Algorithmic catalyst]
\label{def:catalyst}
Let $\Sys_0=(\Alg_0,\Hw_{\mathrm{exec},0},\Hw_{\mathrm{adapt},0})$
be a reference system for a task class $\Class$, with irreversible
operation count $\Nop_0$ and intelligence score $\IntOp_0$. A system
$\Sys_{\mathrm{cat}}=(\Alg_{\mathrm{cat}},
\Hw_{\mathrm{exec,cat}},\Hw_{\mathrm{adapt,cat}})$ is an
\emph{algorithmic catalyst} for $\Sys_0$ on $\Class$ if the following
three conditions hold on every $\Task\subseteq\Class$.
\begin{enumerate}
\item \textbf{Pathway opening.} At matched intelligence
$\IntOp(\Sys_{\mathrm{cat}},\Task)=\IntOp_0$, deployment
irreversibility is strictly reduced,
$\Nop(\Sys_{\mathrm{cat}},\Task)<\Nop(\Sys_0,\Task)$ (where
deployment work excludes training, adaptation and restoration
erasures, which are accounted separately via $\Eadapt,\Erestore$ below).
\item \textbf{Bounded reconfiguration.} There exists
$\Delta_K^{\mathrm{cycle}}\in\mathbb{Z}_{\ge 0}$ such that after
each benchmark cycle the execution substrate lies within
$\Delta_K^{\mathrm{cycle}}$ bits of the idle description
$\idle(\Alg_{\mathrm{cat}})$ in prefix description length, and the
per-cycle restoration energy accounted on $\Hw_{\mathrm{adapt,cat}}$
satisfies
\begin{equation}
\label{eq:restore-energy}
\Erestore\;\ge\;
\Fover(\Hw_{\mathrm{adapt,cat}})\,\Delta_K^{\mathrm{cycle}}\,\Erasure.
\end{equation}
\item \textbf{Structural selectivity.} There exist $\delta>0$ and
$\eta>0$ such that the speed-up persists over arbitrarily large
finite subsets of the task class:\begin{equation}
\label{eq:transfer}
\liminf_{|\Task'|\to\infty,\,\Task'\subseteq\Class}
\Gamma(\Task')\;\ge\;1+\delta,
\end{equation}
and the substrate description conditionally compresses the class
descriptor by at least $\eta$ bits,
\begin{equation}
\label{eq:selectivity}
K_U\!\left(\sigma(\Class)\mid
\descop(\Hw_{\mathrm{exec,cat}})\right)\;\le\;
K_U(\sigma(\Class))-\eta.
\end{equation}
\end{enumerate}
The quantity $\eta$ is the \emph{structural information} carried by
the substrate about the task class.
\end{definition}
The pathway-opening condition is written in irreversible-bit units so
that reductions in $\Nop$ directly correspond to reductions in
physical cost via \eqref{eq:energy-floor} at matched
execution-substrate overhead $\frac{\WPI^{\downarrow}(\Sys_{\mathrm{cat}})}
     {\WPI^{\downarrow}(\Sys_0)} = \frac{1}{\Gamma}$.
A catalytic system is therefore one that makes previously costly
computational routes usable at lower energy, in the same sense that
a chemical catalyst makes an otherwise inaccessible reaction pathway
available. The bounded-reconfiguration
condition ensures that this advantage is reusable rather than
one-off: as in chemistry, the catalyst may change during the process
but must return close to a fixed reference state after each cycle,
with the work required to restore that state explicitly accounted
for rather than ignored. The structural-selectivity condition
captures the analogue of active-site specificity: the substrate must
encode information about the structure of the task class itself, so
that the reduction in cost persists on new instances drawn from that
class, rather than being limited to a finite set of memorised inputs.

A useful consequence of the framework is a natural notion of
\emph{refinement} between substrates, which we leverage when considering how algorithmic catalysts may be composed (Section~\ref{sec:composition}). Intuitively,
a substrate $\Hw'$ refines $\Hw$ if it contains all of the
structural information that $\Hw$ provides about the task class,
possibly together with additional structure. This is
captured by the condition $K_U(\descop(\Hw)\mid\descop(\Hw')) = O(1)$ meaning that $\Hw$ can be reconstructed from $\Hw'$ up to a
constant description-length overhead and is formalised in the following lemma:

\begin{lemma}[Substrate monotonicity]
\label{lem:monotone}
If $\Hw'$ refines $\Hw$, then
$\MutInfo(\descop(\Hw'):\sigma(\Class))\ge
\MutInfo(\descop(\Hw):\sigma(\Class))-c_U$, and in particular
$\eta'\ge\eta-c_U$.
\end{lemma}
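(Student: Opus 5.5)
The plan is to prove the inequality in its conditional form. That is, we show
\[
K_U(\sigma(\Class)\mid\descop(\Hw'))\le K_U(\sigma(\Class)\mid\descop(\Hw))+c_U,
\]
and then subtract both sides from $K_U(\sigma(\Class))$. By \eqref{eq:mutinfo}, written with $x=\sigma(\Class)$ as the first argument, this gives
\[
K_U(\sigma(\Class))-K_U(\sigma(\Class)\mid\descop(\Hw'))\ge K_U(\sigma(\Class))-K_U(\sigma(\Class)\mid\descop(\Hw))-c_U.
\]
Here we should be explicit about an ordering issue. Definition~\ref{def:catalyst} phrases selectivity via $K_U(\sigma(\Class)\mid\descop(\Hw))$, which is $\MutInfo(\sigma(\Class):\descop(\Hw))$ in the orientation of \eqref{eq:mutinfo}. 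The lemma writes the arguments the other way round. We would either read $\MutInfo$ in the lemma as this orientation, or invoke the symmetry of algorithmic mutual information noted after \eqref{eq:mutinfo}. The symmetry route costs an $O(\log K)$ term, which would then have to be absorbed into $c_U$. We will do the proof in the selectivity orientation so that the constant is genuinely $O(1)$, and state the symmetric version as a remark with the logarithmic slack.

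The key step is a program-composition argument on the prefix machine $U$. By the refinement hypothesis, there is a self-delimiting program $q$ with $|q|\le c_1$, with $c_1$ depending only on $U$ (and on the fixed encoding $\descop$), such that $U(q,\descop(\Hw'))=\descop(\Hw)$. Now let $p$ be a shortest program with $U(p,\descop(\Hw))=\sigma(\Class)$. Build a program $r$ that proceeds as follows. It first parses the self-delimiting prefix $q$ and runs it on the conditional input $\descop(\Hw')$ to recover $\descop(\Hw)$. It then parses $p$ and runs it on the recovered string. Because both $q$ and $p$ are prefix-free, the concatenation is uniquely decodable. The cost of $r$ is $|p|+|q|+c_2$, where $c_2$ is the length of a fixed interpreter for "run two concatenated self-delimiting programs in sequence," and $c_2$ depends only on $U$. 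Setting $c_U:=c_1+c_2$ gives the displayed inequality. This is just the standard chain inequality $K(x\mid z)\le K(x\mid y)+K(y\mid z)+O(1)$ specialised to $K(y\mid z)=O(1)$.

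For the corollary $\eta'\ge\eta-c_U$, we take $\eta$ to be the compression witnessed in \eqref{eq:selectivity} for $\Hw$, that is, $K_U(\sigma(\Class)\mid\descop(\Hw))\le K_U(\sigma(\Class))-\eta$. Combining this with the displayed inequality gives $K_U(\sigma(\Class)\mid\descop(\Hw'))\le K_U(\sigma(\Class))-(\eta-c_U)$. So $\Hw'$ satisfies \eqref{eq:selectivity} with $\eta'=\eta-c_U$. If $\eta'$ is read as the maximal admissible value for $\Hw'$, then that maximal value is at least $\eta-c_U$.

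The main obstacle is interpretive rather than technical. We must fix what "$O(1)$" in the refinement condition means, namely a constant independent of the substrates and of $\Class$, so that $c_U$ is a machine constant rather than a hidden dependence. We also need the argument-order and symmetry issue above handled cleanly. Everything else is routine invariance-theorem bookkeeping.
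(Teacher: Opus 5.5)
Your proposal is correct and follows the paper's proof essentially step for step: you use the refinement hypothesis with the standard composition bound $K_U(x\mid z)\le K_U(x\mid y)+K_U(y\mid z)+O(1)$ to get $K_U(\sigma(\Class)\mid\descop(\Hw'))\le K_U(\sigma(\Class)\mid\descop(\Hw))+c_U$, subtract from $K_U(\sigma(\Class))$, and then combine with \eqref{eq:selectivity} to obtain $\eta'\ge\eta-c_U$. Your argument-order remark is justified: the paper's proof, like the equality in Theorem~\ref{thm:selectivity}, silently reads $\MutInfo(\descop(\Hw):\sigma(\Class))$ as $K_U(\sigma(\Class))-K_U(\sigma(\Class)\mid\descop(\Hw))$, and your choice to work in that orientation is what keeps $c_U$ a genuine $O(1)$ constant rather than $O(\log K)$.
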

\begin{proof}
Conditioning on $\descop(\Hw')$ allows reconstruction of
$\descop(\Hw)$ with $O(1)$ overhead by the refinement hypothesis,
so $K_U(\sigma(\Class)\mid\descop(\Hw'))\le
K_U(\sigma(\Class)\mid\descop(\Hw))+c_U$ by the standard coding
argument for conditional prefix complexity. Subtracting both sides
from $K_U(\sigma(\Class))$ yields the mutual-information inequality,
and the $\eta$-inequality follows from \eqref{eq:selectivity}.
\end{proof}
Monotonicity is of use because it allows catalytic improvements to accumulate in a controlled way:
once a substrate has encoded useful structure, any refinement
inherits that structure and can build on it, rather than having to
recover it from scratch. We now turn to the central result of the framework,
which formalises how much improvement such structure can support in
terms of the information it encodes about the task class.

\section{The Structural Selectivity Theorem}
\label{sec:selectivity}

A substrate cannot exploit structure it does not contain. In the same
way that a chemical catalyst can only accelerate reactions whose
transition states are stabilised by its geometry, a computational
substrate can only reduce cost for those aspects of a task class that
are reflected in its own structure. Any transferable, class-specific
speed-up must therefore arise from information about that class,
whether encoded in program text, trained weights, circuit layout,
memory organisation or control dynamics. We formalise these conditions via the structural selectivity theorem below, which isolates the
class-specific component of speed-up in a universal-search model and
separates it from generic implementation improvements.

\begin{definition}[Universal-search cost model]
\label{def:search-model}
A \emph{universal-search cost model} for a task class $\Class$ is a
cost function on algorithms that, given access to a substrate
description $S$ as conditional input, assigns to each prefix program
$p$ a solver cost on instances of size $n$ proportional to
$2^{|p|}\cdot\poly(n)$, and declares $p$ admissible on $\Class$
when $p$, conditioned on $S$, reconstructs enough of $\sigma(\Class)$
to select a polynomial-time solver for every
$\Task\subseteq\Class$. 
\end{definition}
Under this definition, generic implementation improvements that do
not depend on $\sigma(\Class)$ are absorbed into the reference
baseline, so that $\Gamma(\Task)$ of Definition~\ref{def:barrier}
measures only the class-specific component. Universal-search models of this kind originate with
Levin~\cite{levin1984randomness} and provide the natural setting in
which the shortest program reconstructing the class structure
dominates search cost. In the catalytic case, the substrate description
\(S=\descop(\Hw_{\mathrm{exec,cat}})\) is treated as conditional
information for reconstructing \(\sigma(\Class)\).

\begin{theorem}[Structural selectivity]
\label{thm:selectivity}
Let $\Sys_{\mathrm{cat}}$ be an algorithmic catalyst for $\Sys_0$ on
$\Class$ and let $S:=\descop(\Hw_{\mathrm{exec,cat}})$. In any
universal-search cost model satisfying Definition~\ref{def:search-model},
the class-specific barrier reduction
$\Barrier(\Sys_0,\Task)-\Barrier(\Sys_{\mathrm{cat}},\Task)
= \log_2\Gamma(\Task)$ from Definition \ref{def:barrier} satisfies
\begin{equation}
\label{eq:selectivity-bound}
\log_2\Gamma(\Task)\;\le\;
\MutInfo(\descop(\Hw_{\mathrm{exec,cat}}):\sigma(\Class))+c_U
\;=\;K_U(\sigma(\Class))-K_U(\sigma(\Class)\mid S)+c_U.
\end{equation}
\end{theorem}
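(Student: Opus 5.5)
The plan is to work entirely inside the universal-search cost model of Definition~\ref{def:search-model}. I compare the cheapest admissible program for the reference system with the cheapest admissible program for the catalyst, and convert the cost ratio into a difference of description lengths. Concretely, I write $\Nop(\Sys,\Task)$ as $2^{|p^\ast|}\cdot\poly(n)$ up to the proportionality constant of the model, where $p^\ast$ is the shortest admissible program for the system in question. Generic implementation factors are absorbed into the baseline, as the text after Definition~\ref{def:search-model} stipulates, so the $\poly(n)$ factors and proportionality constants match on both sides and cancel in $\Gamma(\Task)=\Nop(\Sys_0,\Task)/\Nop(\Sys_{\mathrm{cat}},\Task)$. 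This gives $\log_2\Gamma(\Task)=|p_0^\ast|-|p_{\mathrm{cat}}^\ast|+O(1)$.

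The second step bounds each program length by Kolmogorov complexities of $\sigma(\Class)$.

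\emph{Reference side.} $\Sys_0$ has no substrate side-information about $\Class$; its conditional input is at most a class-independent description, absorbed into $c_U$. Its shortest admissible program can therefore be taken of length at most $K_U(\sigma(\Class))+O(1)$: run the shortest program for $\sigma(\Class)$ followed by a fixed selector that, given $\sigma(\Class)$, outputs a polynomial-time solver. So $|p_0^\ast|\le K_U(\sigma(\Class))+O(1)$.

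\emph{Catalyst side.} Here the direction needed is a lower bound, $|p_{\mathrm{cat}}^\ast|\ge K_U(\sigma(\Class)\mid S)-O(1)$. Admissibility means $p_{\mathrm{cat}}^\ast$, conditioned on $S$, reconstructs enough of $\sigma(\Class)$ to select solvers for every $\Task\subseteq\Class$. I would formalise ``enough of $\sigma(\Class)$'' as: a fixed decoder maps the output of $p_{\mathrm{cat}}^\ast$ given $S$ to $\sigma(\Class)$. This is justified because $\sigma(\Class)$ is defined as the shortest description of the regularities making instances class members, so anything selecting solvers uniformly on all of $\Class$ determines it. Composing with the decoder gives $K_U(\sigma(\Class)\mid S)\le|p_{\mathrm{cat}}^\ast|+O(1)$.

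Subtracting the two bounds yields
\[
\log_2\Gamma(\Task)\le K_U(\sigma(\Class))-K_U(\sigma(\Class)\mid S)+c_U,
\]
where $c_U$ collects the decoder, the selector and the baseline-normalisation constants. This is the right-hand side of \eqref{eq:selectivity-bound}, and it equals $\MutInfo(S:\sigma(\Class))+c_U$ by \eqref{eq:mutinfo}. Since the equality in \eqref{eq:selectivity-bound} is exactly definition \eqref{eq:mutinfo}, no symmetry of information, and hence no $O(\log K)$ term, is needed.

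The main obstacle is the catalyst-side lower bound: arguing that admissibility forces reconstruction of all of $\sigma(\Class)$ rather than some lossy part of it. If only part were reconstructed, the bound would involve a coarser descriptor. I would handle this by reading the admissibility clause together with the transfer condition \eqref{eq:transfer}, which requires speed-up on arbitrarily large $\Task'\subseteq\Class$, as requiring a uniform selector for the whole class. I would then invoke the minimality in the definition of $\sigma(\Class)$ to identify the reconstructed information with $\sigma(\Class)$ up to $O(1)$ bits. A secondary point is making the baseline's optimal program length $\le K_U(\sigma(\Class))+O(1)$, rather than merely at least the catalyst's length. This follows from the model: any admissible program must reconstruct $\sigma(\Class)$, and the shortest unconditional one does so within $K_U(\sigma(\Class))+O(1)$ bits.
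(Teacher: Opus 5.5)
Your argument is essentially the paper's. Both compare the shortest admissible program with and without $S$ as conditional input, cancel the common $\poly(n)$ factors, and read $\log_2\Gamma(\Task)$ off as a difference of program lengths. You are more careful than the paper about directions. You need only $|p_0^\ast|\le K_U(\sigma(\Class))+O(1)$ and $|p_{\mathrm{cat}}^\ast|\ge K_U(\sigma(\Class)\mid S)-O(1)$, whereas the paper asserts both as $O(1)$ equalities for programs that reconstruct $\sigma(\Class)$ only ``to the accuracy required''.

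Your decoder step, however, is an assumption read into Definition~\ref{def:search-model}, not a consequence of the minimality of $\sigma(\Class)$. ``Enough of'' $\sigma(\Class)$ may be a coarser descriptor that still selects valid solvers. The paper relies on the same identification silently. You can avoid it as follows:
\begin{itemize}
\item Let $t$ be whatever target the catalyst's shortest admissible program reconstructs from $S$.
\item The baseline may also reconstruct $t$, so your argument gives $\log_2\Gamma(\Task)\le K_U(t)-K_U(t\mid S)+O(1)$.
\item If $t$ is a fixed computable function of $\sigma(\Class)$, this is at most the information $S$ carries about $\sigma(\Class)$, up to logarithmic terms, by algorithmic data processing.
\end{itemize}

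One claim is wrong, though. The middle equality in \eqref{eq:selectivity-bound} is not definitional. By \eqref{eq:mutinfo}, $\MutInfo(x:y)=K_U(x)-K_U(x\mid y)$, so $\MutInfo(S:\sigma(\Class))=K_U(S)-K_U(S\mid\sigma(\Class))$. Your subtraction instead produces $K_U(\sigma(\Class))-K_U(\sigma(\Class)\mid S)=\MutInfo(\sigma(\Class):S)$. Passing between the two is exactly symmetry of information. For prefix complexity that holds only up to an additive $O(\log K)$ term, as the paper itself remarks right after \eqref{eq:mutinfo}. So your assertion that neither symmetry nor an $O(\log K)$ term is needed is false, and you have two options:
\begin{itemize}
\item Invoke symmetry and absorb the logarithmic error into $c_U$, as the paper does elsewhere when it collects logarithmic terms into $c_U$. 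The price is that $c_U$ is no longer a pure machine constant.
\item State the conclusion as $\log_2\Gamma(\Task)\le\MutInfo(\sigma(\Class):S)+c_U$, which is what your argument actually establishes.
\end{itemize}
This ordering matters downstream, because the physical-erasure lemma bounds $\MutInfo(S:\sigma(\Class))$ in the other order.
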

\begin{proof}
Let $p^*$ denote the shortest prefix program that reconstructs
$\sigma(\Class)$ to the accuracy required by
Definition~\ref{def:search-model} without supplementary information, so that
$|p^*|=K_U(\sigma(\Class))+O(1)$, and let $p^*_S$ denote the shortest
prefix program reconstructing $\sigma(\Class)$ with $S$ as conditional
input, so that $|p^*_S|=K_U(\sigma(\Class)\mid S)+O(1)$. By the
definition of the cost model, any admissible solver selection
reduces to emitting such a program, whose dominant multiplicative
factor is $2^{|p^*|}$ in the absence of side information and
$2^{|p^*_S|}$ when $S$ is available. The ratio of the two upper-bounds
the class-specific speed-up attributable to $S$:
\begin{equation}
\Gamma(\Task)\;\le\;2^{\,|p^*|-|p^*_S|+O(1)}
\;=\;2^{\,K_U(\sigma(\Class))-K_U(\sigma(\Class)\mid S)+O(1)}.
\end{equation}
Taking logarithms and absorbing the $O(1)$ term into the universal machine constant $c_U$ gives
\eqref{eq:selectivity-bound}. 
\end{proof}
Improvements that do not depend on the structure of the task class are
treated as part of the baseline, so that $\Gamma$ measures only the
reduction coming from exploiting $\sigma(\Class)$ itself. The transfer
condition \eqref{eq:transfer} then rules out systems that merely
memorise a finite set of instances. Such a system may reduce work on a
fixed benchmark, but as the class is enlarged it provides no
information about how new instances are generated. In that case the
substrate does not shorten the description of $\sigma(\Class)$,
so $K_U(\sigma(\Class)\mid S)=K_U(\sigma(\Class))+O(1)$, and the
apparent speed-up disappears in the limit.

Two immediate consequences of Theorem~\ref{thm:selectivity} clarify
what limits catalytic improvement. Let
\[
\mu :=
\MutInfo(\descop(\Hw_{\mathrm{exec,cat}}):\sigma(\Class))
\]
denote how much information the substrate actually carries about
the structure of the task class. The structural-information
parameter $\eta$ of Definition~\ref{def:catalyst} cannot exceed
this quantity, $\eta \le \mu$, so the achievable speed-up is
directly constrained by how much of the class structure is already
encoded in the substrate. In the regime where solving the class is
essentially equivalent to recovering $\sigma(\Class)$, this bound is
tight up to the universal constant $c_U$.

This interpretation makes clear why simple lookup mechanisms fail
to qualify as catalysts. As the corollary below shows, a cache (i.e. a finite table storing
precomputed input/output pairs) can reduce cost on the stored
instances, but it does not encode the underlying structure of the
task class. As the class is enlarged, its advantage does not
transfer, because it provides no information about how new
instances should be solved. By contrast, a genuine catalyst
embodies class-level structure and continues to reduce cost across
new instances drawn from the same class, just as a chemical
catalyst selectively accelerates an entire family of reactions
rather than a finite list of outcomes. 

\begin{corollary}[Cache non-example]
\label{cor:cache}
A finite lookup table $\Hw_{\mathrm{cache}}$ storing fixed
input-output pairs on a bounded set
$\Task_{\mathrm{stored}}\subseteq\Class$ satisfies
$\liminf_{|\Task'|\to\infty}\Gamma(\Task')=1$, and therefore cannot
be an algorithmic catalyst for $\Sys_0$ on $\Class$ in the sense of
Definition~\ref{def:catalyst}.
\end{corollary}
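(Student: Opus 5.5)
The plan is to show directly that the advantage of the cache dilutes to nothing as the task set grows. We then conclude that the transfer condition \eqref{eq:transfer} fails for every $\delta>0$. As a cross-check, we also show that the selectivity condition \eqref{eq:selectivity} holds only with $\eta=O(1)$, consistent with Theorem~\ref{thm:selectivity}.

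\textbf{Step 1: split the cost.} Let $m:=|\Task_{\mathrm{stored}}|<\infty$. For an arbitrary finite $\Task'\subseteq\Class$, decompose the deployment count of the cache system into stored and unstored instances:
\[
\Nop(\Sys_{\mathrm{cache}},\Task')=\sum_{x\in\Task'\cap\Task_{\mathrm{stored}}}\Nop_{\mathrm{cache}}(x)+\sum_{x\in\Task'\setminus\Task_{\mathrm{stored}}}\Nop_0(x).
\]
On unstored instances the table carries no information, so the system falls back to the baseline. Any lookup miss overhead is nonnegative and only helps the bound. At matched intelligence, the stored part contributes at most $m\cdot\max_{x\in\Task_{\mathrm{stored}}}\Nop_0(x)=:M$ to the baseline. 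This gives
\[
\Gamma(\Task')\le\frac{\Nop_0(\Task')}{\Nop_0(\Task')-M},
\]
so $\Gamma(\Task')\to1$ provided $\Nop_0(\Task')\to\infty$ as $|\Task'|\to\infty$.

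\textbf{Step 2: the growth condition.} That divergence holds because each instance requires at least one irreversible operation, for instance writing an output bit. Hence $\Nop_0(\Task')\ge|\Task'|-m$. Combined with the trivial bound $\Gamma\ge1$ on unstored-dominated sets, this gives $\liminf\Gamma(\Task')=1$.

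\textbf{Step 3: the information side.} I would argue that $S=\descop(\Hw_{\mathrm{cache}})$ is a finite list of pairs. Such a list determines $\sigma(\Class)$ only to the extent that a finite sample does. In the universal-search model of Definition~\ref{def:search-model}, the admissibility requirement ranges over all $\Task\subseteq\Class$, so a finite table cannot shorten the program selecting a solver for unstored tasks. We can therefore take $K_U(\sigma(\Class)\mid S)=K_U(\sigma(\Class))+O(1)$, and Theorem~\ref{thm:selectivity} bounds the class-specific $\log_2\Gamma$ by $c_U$. This step is the least rigorous, since a finite table could in principle contain $\sigma(\Class)$ steganographically. So I would rest the proof on Steps 1 and 2, and present Step 3 only as the informational explanation under the stated convention that the cache stores only input/output pairs.

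\textbf{Step 4: conclusion.} Definition~\ref{def:catalyst} requires $\delta>0$ in \eqref{eq:transfer}. A liminf equal to $1$ violates this, so $\Hw_{\mathrm{cache}}$ is not an algorithmic catalyst.

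\textbf{Main obstacle.} The hardest step is justifying the assumption in Step 1 that per-instance baseline cost is unaffected on unstored instances and that $\Nop_0$ diverges. I would state this explicitly as the operational meaning of "finite lookup table": the cache alters behaviour only on $\Task_{\mathrm{stored}}$.
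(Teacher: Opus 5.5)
Your proposal is correct and follows essentially the paper's route: the cache's advantage is diluted as $|\Task'\setminus\Task_{\mathrm{stored}}|\to\infty$, so \eqref{eq:transfer} fails, and the bound on $K_U(\sigma(\Class)\mid\descop(\Hw_{\mathrm{cache}}))$ serves only as supporting explanation. Your version is more careful than the paper's ``vanishing proportion of instances'' phrasing, because bounding the stored contribution by a fixed $M$ and forcing $\Nop_0(\Task')\to\infty$ is exactly what turns a vanishing fraction of instances into a vanishing fraction of work.

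Two small remarks. First, the equality $\liminf\Gamma=1$, as opposed to the $\liminf\Gamma\le 1$ that non-catalyticity actually needs, requires negligible overhead on cache misses; your operational convention supplies this, but your aside that miss overhead ``only helps'' slightly obscures it. Second, your caution about Step~3 is warranted for a stronger reason than steganography: honest input--output pairs can carry all of $\sigma(\Class)$. In the affine-SAT class of Section~\ref{sec:affine-sat}, a single stored formula already determines $V$, because $K_U$ ignores running time. So it is the dilution argument alone that carries the corollary.
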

\begin{proof}
On any $\Task'$ with $|\Task'\setminus\Task_{\mathrm{stored}}|
\to\infty$, the proportion of instances covered by the cache
vanishes, so the overall work approaches that of $\Sys_0$, giving
$\Gamma(\Task')\to 1$. From the information perspective, the
description of $\Hw_{\mathrm{cache}}$ encodes only finitely many
instance-level answers rather than the class structure
$\sigma(\Class)$, so
\[
K_U(\sigma(\Class)\mid\descop(\Hw_{\mathrm{cache}}))
\ge
K_U(\sigma(\Class)) - O(1),
\]
and hence the mutual information with the class is bounded. By
Theorem~\ref{thm:selectivity}, this prevents any unbounded
class-level speed-up.
\end{proof}

\section{Thermodynamic--Informational Coupling}
\label{sec:coupling}

Theorem~\ref{thm:selectivity} is an information-theoretic statement:
it limits how much speed-up can be obtained from the structure a
catalyst encodes, but does not yet relate this to physical cost.
The link to energy comes from the fact that any such structure must
have been established in the substrate during adaptation, and
writing information into a physical system may require irreversible
operations unless that information is already supplied as input.
This is the content of the Zurek--Bennett correspondence between
algorithmic and thermodynamic entropy~\cite{zurek1989algorithmic,bennett1982thermodynamics,bennett1998information}: information
present in the system reflects work that was done to put it there.

To make this precise, let $\mathcal{D}$ denote the adaptation input,
understood broadly to include training data, design specifications,
source code, physical parameters, or any other information available
to the adaptation process without additional irreversible work on
$\Hw_{\mathrm{adapt,cat}}$. Let $S_0$ denote the initial state of the
execution substrate before adaptation, chosen so that
\[
K_U(\sigma(\Class)\mid S_0)=K_U(\sigma(\Class))+O(1),
\]
where $S_0$ contains no useful information about the class
structure. This ensures that any structural information present
after adaptation must have been introduced during the adaptation
process itself. Finally, let $\Herase$ denote the number of logical
erasures performed on $\Hw_{\mathrm{adapt,cat}}$ during adaptation. The remaining question is how much physical work is required to
establish this structure during adaptation. Any information present
in the execution substrate after adaptation must come either from
the input $\mathcal{D}$ or from irreversible operations performed
during the process. The following lemma makes this constraint
explicit.

\begin{lemma}[Physical erasures lower-bound installed information]
\label{lem:zurek-bennett}
Any physical adaptation process on $\Hw_{\mathrm{adapt,cat}}$ that
takes the pair $(S_0,\mathcal{D})$ to a state in which the
execution-substrate component has algorithmic mutual information
$\mu$ with $\sigma(\Class)$ satisfies
\begin{equation}
\label{eq:zurek-bennett}
\Herase\;\ge\;\mu\,-\,\MutInfo(\mathcal{D}:\sigma(\Class))\,-\,c_U.
\end{equation}
\end{lemma}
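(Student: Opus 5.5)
The plan is to model the adaptation process as a computation that is reversible except for its erasures, and then count information through that computation. Concretely, I will take the adaptation process on $\Hw_{\mathrm{adapt,cat}}$ to be a deterministic physical map. Following Bennett's construction, every irreversible step can be replaced by a reversible step that pushes the discarded bit into a separate garbage register, which is then wiped. Under this view the process becomes a reversible (hence bijective, computable) map
\[
(S_0,\mathcal{D})\;\mapsto\;(S_1,\mathcal{D}',g),
\]
where $S_1$ is the post-adaptation execution-substrate state and $|g|=\Herase$ is the garbage erased. The map depends only on a fixed description of the adaptation procedure. That description is absorbed into $c_U$, or more carefully it is treated as part of $\mathcal{D}$, since it is information "available to the adaptation process".

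\textbf{Step 1: reconstruct $S_1$ from side information.} Because the process is a computable function run forward, $S_1$ is computable from $(S_0,\mathcal{D})$, and also from $(S_0,\mathcal{D},g)$ trivially. What the argument actually needs is the converse accounting, namely that information about $\sigma(\Class)$ cannot increase under computation beyond what is supplied as input. This is algorithmic information non-growth (conservation of independence):
\[
\MutInfo(f(x):y)\le \MutInfo(x:y)+K(f)+O(\log)\text{-terms}.
\]
Applying it with $x=(S_0,\mathcal{D})$ gives $\mu\le \MutInfo((S_0,\mathcal{D}):\sigma(\Class))+c_U$.

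\textbf{Step 2: eliminate the initial state and bring in the erasures.} I chain-rule the joint term,
\[
\MutInfo((S_0,\mathcal{D}):\sigma)\le \MutInfo(\mathcal{D}:\sigma)+\MutInfo(S_0:\sigma\mid\mathcal{D})+O(\log),
\]
and the hypothesis $K_U(\sigma(\Class)\mid S_0)=K_U(\sigma(\Class))+O(1)$ is meant to kill the $S_0$ contribution. The second route, which is where $\Herase$ enters, goes as follows. If the process had no erasures, the map would be injective on the physical state space, so the final state carries no information absent from the initial pair. Each erased bit, on the other hand, is a point where the process can consult fresh randomness from the environment or resolve a many-to-one merge. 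Put differently, $S_1$ is computable from the pair $(S_0,\mathcal{D})$ together with the $\Herase$ bits describing which branch of each merge was taken (Zurek's accounting). That yields
\[
K_U(\sigma\mid S_1)\ge K_U(\sigma\mid S_0,\mathcal{D})-\Herase-c_U .
\]
Subtracting this from $K_U(\sigma(\Class))$ and using the $S_0$ hypothesis gives $\mu\le \MutInfo(\mathcal{D}:\sigma)+\Herase+c_U$, which rearranges to \eqref{eq:zurek-bennett}.

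\textbf{Main obstacle.} The delicate step is justifying the inequality $K_U(\sigma\mid S_1)\ge K_U(\sigma\mid S_0,\mathcal{D})-\Herase-c_U$. In Bennett's deterministic model erasures are not sources of information; it is noise or environmental input that is. So I must fix a convention in which every bit entering the substrate from outside $\mathcal{D}$ must overwrite, and hence erase, a prior bit. With that convention, an upper bound of $\Herase$ on the number of such injected bits follows from a state-counting argument on the fixed-size register.

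The other obstacle is that the non-growth and chain-rule steps hold only up to $O(\log K)$ terms, not $O(1)$. I would either let $c_U$ absorb these logarithmic terms, matching the paper's tolerance for symmetry of $\MutInfo$ up to $O(\log K)$, or state the bound with that slack explicit.
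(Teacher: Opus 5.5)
\paragraph{Main gap: eliminating $S_0$.} The step that fails is the elimination of $S_0$. After your chain rule, the surviving $S_0$ term is $\MutInfo(S_0:\sigma(\Class)\mid\mathcal{D})=K_U(\sigma(\Class)\mid\mathcal{D})-K_U(\sigma(\Class)\mid S_0,\mathcal{D})$. The hypothesis $K_U(\sigma(\Class)\mid S_0)=K_U(\sigma(\Class))+O(1)$ controls only the unconditional quantity $\MutInfo(S_0:\sigma(\Class))$, and conditioning on $\mathcal{D}$ can create information. Take $r$ random relative to $\sigma(\Class)$, and set $S_0=\sigma(\Class)\oplus r$ and $\mathcal{D}=r$. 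Up to logarithmic terms, each of $S_0$ and $\mathcal{D}$ is independent of $\sigma(\Class)$. Yet the reversible in-place update $S_0\leftarrow S_0\oplus\mathcal{D}$ has $\Herase=0$ and leaves the execution substrate holding $\sigma(\Class)$. So $\mu\approx K_U(\sigma(\Class))$ while $\MutInfo(\mathcal{D}:\sigma(\Class))\approx 0$, and the inequality cannot be derived from the stated hypothesis on $S_0$ by any route. The paper's proof has the same hole in mirror image. It asserts $\MutInfo(\mathcal{D}:\sigma(\Class)\mid S_0)\le\MutInfo(\mathcal{D}:\sigma(\Class))+O(\log)$ on the grounds that $S_0$ carries only $O(1)$ information about $\sigma(\Class)$, and this example refutes that step as well. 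The fix is to assume $S_0$ is independent of the pair, $\MutInfo(S_0:(\mathcal{D},\sigma(\Class)))=O(\log)$; for instance, $S_0$ could be a low-complexity blank reset state. The chain rule then gives $\MutInfo(S_0:\sigma(\Class)\mid\mathcal{D})\le\MutInfo(S_0:(\mathcal{D},\sigma(\Class)))+O(\log)$, and your argument closes with the logarithmic terms absorbed into $c_U$.

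\paragraph{Secondary issue: the role of $\Herase$.} Apart from this, your skeleton is the paper's: Bennett simulation, algorithmic data processing, a chain rule, and a length bound on the only new input. Your handling of $\Herase$, however, mixes two models. In your deterministic model, Step 1 already gives $\mu\le\MutInfo((S_0,\mathcal{D}):\sigma(\Class))+c_U$ with no $\Herase$ at all. The branch-of-merge bits you cite are what one needs to run the process backwards, from output to input, not to compute $S_1$ forwards. $\Herase$ enters only if the process ingests external bits, and in that case Step 1 as stated no longer applies. The paper commits to this second model. It feeds an advice string $\mathrm{adv}$ with $|\mathrm{adv}|\le\Herase+O(1)$ into the reversible simulation and bounds $\MutInfo(\mathrm{adv}:\sigma(\Class)\mid S_0,\mathcal{D})\le|\mathrm{adv}|+O(1)$ as the third term of its chain rule. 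That is exactly your ``injected bits must overwrite'' convention. So the step you flag as the main obstacle is a modelling choice; the real difficulty is the $S_0$ term above.
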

\begin{proof}
Any physical computation with $\Herase$ logical erasures admits,
by Bennett's reversible simulation
theorem \cite{bennett1973logical,bennett1989spacetime}, a reversible simulation using an auxiliary advice string
\(\mathrm{adv}\), with \(|\mathrm{adv}|\le \Herase+O(1)\), recording
the information lost in those erasures. Let
$\tilde{\mathcal{A}}$ denote such a reversible simulation of the
adaptation process, taking the augmented input
$(S_0,\mathcal{D},\mathrm{adv})$ with $|\mathrm{adv}|\le\Herase+O(1)$
to the augmented output
$(S,\mathcal{D}',\mathrm{hist})$, where $S$ is the post-adaptation
execution-substrate state and $\mathrm{hist}$ records the evolution.
Because $\tilde{\mathcal{A}}$ is reversible and has constant-size
description, the data-processing inequality for algorithmic mutual
information~\cite{li2019kolmogorov,hutter2004universal} gives
\begin{equation}
\label{eq:data-processing}
\MutInfo(S:\sigma(\Class))\;\le\;
\MutInfo\bigl((S_0,\mathcal{D},\mathrm{adv}):\sigma(\Class)\bigr)+O(\log),
\end{equation}
since any program computing $\sigma(\Class)$ from $S$ can be
composed with the inverse of $\tilde{\mathcal{A}}$ applied to the
full output tuple at a cost of $O(1)$ program bits. Applying the
Kolmogorov chain rule for algorithmic mutual
information~\cite{li2019kolmogorov} to the triple on the right:
\begin{align}
\MutInfo\bigl((S_0,\mathcal{D},\mathrm{adv}):\sigma(\Class)\bigr)
\;&\le\;
\MutInfo(S_0:\sigma(\Class))\notag\\
&+\MutInfo(\mathcal{D}:\sigma(\Class)\mid S_0)
+\MutInfo(\mathrm{adv}:\sigma(\Class)\mid S_0,\mathcal{D})+O(\log).
\label{eq:chainrule}
\end{align}
The first term is $O(1)$ by the choice of $S_0$; since conditioning
on a string carrying only $O(1)$ information about the target
changes mutual information by at most $O(\log)$, the second term
satisfies $\MutInfo(\mathcal{D}:\sigma(\Class)\mid S_0)\le
\MutInfo(\mathcal{D}:\sigma(\Class))+O(\log)$; and the third term is
bounded above by $|\mathrm{adv}|+O(1)\le\Herase+O(1)$ because
algorithmic mutual information is bounded above by the length of
either argument. Collecting logarithmic additive terms into $c_U$,
\begin{equation}
\mu\;=\;\MutInfo(S:\sigma(\Class))\;\le\;
\MutInfo(\mathcal{D}:\sigma(\Class))+\Herase+c_U,
\end{equation}
which rearranges to \eqref{eq:zurek-bennett}.
\end{proof}
Lemma~\ref{lem:zurek-bennett} provides the link between
information and physical cost. It says that if the execution
substrate ends up encoding $\mu$ bits of information about the
task class, that information must have come from somewhere: either
it was already present in the adaptation input $\mathcal{D}$, or it
was created during adaptation through irreversible operations. In
the latter case, the information must be recorded — via the advice
bits in the reversible simulation — and each such bit corresponds to
at least one logical erasure. Since erasures have a minimum energy
cost of $\Erasure$ per bit on the adaptation substrate, this places
a direct lower bound on the physical work required to encode
structural information. The following Theorem~\ref{thm:adaptation-cost} makes this
explicit by translating the erasure bound into an energy bound:
only the information not already supplied by $\mathcal{D}$ must be
paid for thermodynamically, and the cost scales linearly with that
residual information.

\begin{theorem}[Thermodynamic cost of structural information]
\label{thm:adaptation-cost}
The adaptation energy $\Eadapt$ on $\Hw_{\mathrm{adapt,cat}}$ of any system
$\Sys_{\mathrm{cat}}$ with post-adaptation substrate mutual
information $\mu$ and adaptation input $\mathcal{D}$ satisfies
\begin{equation}
\label{eq:adaptation-floor}
\Eadapt\;\ge\;
\Fover(\Hw_{\mathrm{adapt,cat}})\,\Erasure\,
\bigl[\mu\,-\,\MutInfo(\mathcal{D}:\sigma(\Class))\,-\,c_U\bigr]_+,
\end{equation}
where $[x]_+:=\max(x,0)$.
\end{theorem}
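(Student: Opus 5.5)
The proof will combine Lemma~\ref{lem:zurek-bennett} with the substrate-level energy floor \eqref{eq:energy-floor}, applied now to the adaptation substrate rather than the execution substrate. The argument has three steps.

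\textbf{Step 1: Landauer floor on $\Hw_{\mathrm{adapt,cat}}$.} Each of the $\Herase$ logical erasures performed during adaptation is an irreversible bit-operation on $\Hw_{\mathrm{adapt,cat}}$. Applying the second-law argument behind \eqref{eq:energy-floor}, with the overhead factor of that substrate, gives
\[
\Eadapt\;\ge\;\Fover(\Hw_{\mathrm{adapt,cat}})\,\Herase\,\Erasure .
\]
This is the same accounting used for $\Erestore$ in \eqref{eq:restore-energy}. The framework already charges reconfiguration erasures separately to the adaptation substrate, so no new physical assumption is needed.

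\textbf{Step 2: insert the erasure bound.} By hypothesis, the post-adaptation execution-substrate state has mutual information $\mu$ with $\sigma(\Class)$. The adaptation process takes $(S_0,\mathcal{D})$ to that state, with $S_0$ carrying no class information. These are exactly the hypotheses of Lemma~\ref{lem:zurek-bennett}, which therefore gives
\[
\Herase\ge \mu-\MutInfo(\mathcal{D}:\sigma(\Class))-c_U .
\]
Since $\Fover\ge 1>0$ and $\Erasure>0$, multiplying preserves the inequality and yields the bracketed expression without the positive part.

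\textbf{Step 3: the $[\cdot]_+$.} Because $\Herase\ge 0$ trivially (it is a count), we have $\Herase\ge\max(0,\mu-\MutInfo(\mathcal{D}:\sigma(\Class))-c_U)$. Combining this with Step~1 gives \eqref{eq:adaptation-floor}. The positive part also covers the case where $\mathcal{D}$ already supplies all the structure, in which case the bound correctly reduces to $\Eadapt\ge 0$.

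\textbf{Main obstacle.} The only delicate point is in Step~1. One must ensure that every erasure counted in $\Herase$ is physically realised on $\Hw_{\mathrm{adapt,cat}}$, so that it is paid at overhead $\Fover(\Hw_{\mathrm{adapt,cat}})$, and that none is offloaded to $\Hw_{\mathrm{exec}}$ or to the environment as input. My approach is to invoke the definition of $\Herase$, which counts logical erasures on $\Hw_{\mathrm{adapt,cat}}$, together with the stipulation that $\mathcal{D}$ contains exactly the information available without irreversible work on that substrate. With those two facts, the energy floor applies verbatim.
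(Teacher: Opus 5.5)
Your proposal is correct and follows essentially the same route as the paper: apply the Landauer floor \eqref{eq:energy-floor} on $\Hw_{\mathrm{adapt,cat}}$ to get $\Eadapt\ge\Fover(\Hw_{\mathrm{adapt,cat}})\,\Herase\,\Erasure$, insert the erasure bound of Lemma~\ref{lem:zurek-bennett}, and take the positive part. Your justification of $[\cdot]_+$ via $\Herase\ge 0$ is a slightly cleaner version of the paper's non-negativity remark.
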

\begin{proof}
Combine Lemma~\ref{lem:zurek-bennett}, which shows that at least
$\Herase$ logical erasures are required to encode the necessary
structural information, with the energy lower bound
\eqref{eq:energy-floor} applied to $\Hw_{\mathrm{adapt,cat}}$, which
implies that each such erasure incurs a minimum cost of
$\Fover(\Hw_{\mathrm{adapt,cat}})\Erasure$. This yields the lower bound on $\Eadapt$. The $[\cdot]_+$ requirement enforces
non-negativity: if the adaptation input $\mathcal{D}$ already
provides as much (or more) structural information as the catalyst
ultimately contains, then no additional work is required to encode
it.
\end{proof}
Theorem~\ref{thm:adaptation-cost} supplies the thermodynamic half of
the coupling. The informational half is
Theorem~\ref{thm:selectivity}, which relates $\mu$ to the logarithmic
speed-up. Combining the two yields our central result regarding constraints upon algorithmic catalysis.

\begin{theorem}[Thermodynamic--informational coupling]
\label{thm:coupling}
Let $\Sys_{\mathrm{cat}}$ be an algorithmic catalyst for $\Sys_0$ on
$\Class$, with class-specific speed-up $\Gamma$, per-cycle
restoration energy $\Erestore$, adaptation input $\mathcal{D}$ and
adaptation energy $\Eadapt$ on $\Hw_{\mathrm{adapt,cat}}$. Let \(E_{0\to 1}\) denote the energy required for \(\Sys_0\) to
complete one deployment query and assume a matched execution-substrate overhead, so that
the catalytic per-query deployment energy is $E_{0\to 1}/\Gamma$.
Then
\begin{equation}
\label{eq:coupling-adapt}
\Eadapt\;\ge\;
\Fover(\Hw_{\mathrm{adapt,cat}})\,\Erasure\,
\bigl[\log_2\Gamma\,-\,\MutInfo(\mathcal{D}:\sigma(\Class))
\,-\,2c_U\bigr]_+,
\end{equation}
and, writing \(N_{\mathrm{inf}}^{*}\) for the minimum number of
deployment queries required to amortise adaptation, 
\begin{equation}
\label{eq:coupling-breakeven}
N_{\mathrm{inf}}^{*}\;\ge\;
\frac{\Fover(\Hw_{\mathrm{adapt,cat}})\,\Erasure\,
\bigl[\log_2\Gamma-\MutInfo(\mathcal{D}:\sigma(\Class))-2c_U\bigr]_+}
{E_{0\to 1}(1-1/\Gamma)-\Erestore},
\end{equation}
provided the denominator in \eqref{eq:coupling-breakeven} is
positive. Otherwise no break-even horizon exists.
\end{theorem}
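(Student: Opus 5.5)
The plan is to chain Theorem~\ref{thm:selectivity} into Theorem~\ref{thm:adaptation-cost}, and then to obtain the break-even bound from an elementary energy-balance inequality. For the first claim, I will apply Theorem~\ref{thm:selectivity} to $\Sys_{\mathrm{cat}}$ with $S=\descop(\Hw_{\mathrm{exec,cat}})$. This gives $\log_2\Gamma\le\mu+c_U$, that is, $\mu\ge\log_2\Gamma-c_U$, where $\mu$ is the post-adaptation mutual information between the execution substrate and $\sigma(\Class)$. Theorem~\ref{thm:adaptation-cost} lower-bounds $\Eadapt$ by $\Fover(\Hw_{\mathrm{adapt,cat}})\Erasure[\mu-\MutInfo(\mathcal{D}:\sigma(\Class))-c_U]_+$. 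The map $x\mapsto[x]_+$ is monotone nondecreasing, so substituting the lower bound on $\mu$ yields \eqref{eq:coupling-adapt} with the constant $2c_U$. The only care needed is to apply Theorem~\ref{thm:adaptation-cost} to the same substrate state $S$ that appears in Theorem~\ref{thm:selectivity}, and to note that the initial-state hypothesis on $S_0$ from Lemma~\ref{lem:zurek-bennett} is in force. Both theorems carry the constant $c_U$ for the same universal machine $U$, so the two constants add.

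For the break-even horizon, I will write the total energy over $N$ deployment queries for each system. The reference system costs $N E_{0\to1}$. Under matched execution overhead, the catalyst costs $\Eadapt+N(E_{0\to1}/\Gamma+\Erestore)$, where I treat one query as one restoration cycle, so that Definition~\ref{def:catalyst}(2) charges $\Erestore$ per query. The catalyst is energetically favourable exactly when
\[
N\bigl(E_{0\to1}(1-1/\Gamma)-\Erestore\bigr)\ge\Eadapt .
\]
If the bracketed per-query saving $d:=E_{0\to1}(1-1/\Gamma)-\Erestore$ is nonpositive, then the left side is at most $0$. In that case break-even is impossible whenever $\Eadapt>0$, and is not a meaningful horizon otherwise. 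This is the ``no break-even horizon'' clause of the statement.

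If $d>0$, then $N_{\mathrm{inf}}^*=\lceil\Eadapt/d\rceil\ge\Eadapt/d$. Inserting \eqref{eq:coupling-adapt} into the numerator gives \eqref{eq:coupling-breakeven}, since dividing by the positive quantity $d$ preserves the inequality.

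The main obstacle is the accounting convention rather than any estimate. I need to justify that the per-query restoration cost is exactly $\Erestore$, meaning one benchmark cycle per deployment query. I also need the per-query catalytic deployment energy to equal $E_{0\to1}/\Gamma$, which follows from the matched-overhead assumption together with $\Nop(\Sys_{\mathrm{cat}})=\Nop(\Sys_0)/\Gamma$ via \eqref{eq:energy-floor}. I would state this cycle-equals-query identification explicitly as part of the hypotheses. Finally, I would remark that using a lower bound on $\Eadapt$ only weakens the conclusion in the safe direction, so the resulting bound on $N_{\mathrm{inf}}^*$ remains valid.
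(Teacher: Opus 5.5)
Your proposal is correct and follows the paper's proof essentially step for step. You chain the selectivity bound $\mu\ge\log_2\Gamma-c_U$ into Theorem~\ref{thm:adaptation-cost}, and then derive the horizon from the same per-query energy balance $E_{0\to1}/\Gamma+\Erestore+\Eadapt/N\le E_{0\to1}$ with the lower bound on $\Eadapt$ substituted; your explicit appeal to monotonicity of $[\cdot]_+$, the cycle-equals-query identification and the ceiling $\lceil\Eadapt/d\rceil$ only make precise conventions the paper leaves implicit.
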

\begin{proof}
Theorem~\ref{thm:selectivity} gives $\log_2\Gamma\le\mu+c_U$, so
$\mu\ge\log_2\Gamma-c_U$. Substituting this lower bound into
\eqref{eq:adaptation-floor} of Theorem~\ref{thm:adaptation-cost}
and absorbing the two constants into $2c_U$ yields
\eqref{eq:coupling-adapt}. The break-even count is obtained by
setting the amortised catalytic per-query energy
$E_{0\to 1}/\Gamma+\Erestore+\Eadapt/N_{\mathrm{inf}}$ equal to the
baseline per-query energy $E_{0\to 1}$, solving for
$N_{\mathrm{inf}}$, and substituting the lower bound
\eqref{eq:coupling-adapt} for $\Eadapt$. The denominator positivity
condition expresses that no positive deployment count can amortise
a fixed upfront cost when deployment itself does not save energy.
\end{proof}

Theorem~\ref{thm:coupling} makes explicit the central constraint of
the watts-per-intelligence framework: the speed-up delivered by a
catalyst and the cost of constructing it are tied by the same
underlying structure. Theorem~\ref{thm:selectivity} bounds how much
speed-up is possible from the information the substrate carries
about the task class, while Theorem~\ref{thm:adaptation-cost}
bounds the physical work required to encode that information.
Taken together, they show that achieving a logarithmic speed-up
$\log_2\Gamma$ requires introducing a comparable amount of
structural information into the system, unless that information is
already supplied by the adaptation input. In physical terms, this
information must be written into the substrate, and that process
has a minimum energy cost which must be amortised over deployment.
A catalyst that promises a speed-up $\Gamma$ without access to
sufficient class-level information in $\mathcal{D}$ cannot avoid a
proportional adaptation cost, and therefore cannot improve
watts-per-intelligence on short deployment horizons. 

Note also that, $E$, $\Eadapt$, $\Erestore$, and $E_{0\to 1}$ denote heat dissipated by logically irreversible operations, not the total energy drawn by the hardware. For reusable memory or substrate states, these quantities include the cost of preparing the state, erasing work bits, and returning the state for reuse, except where the relevant class information is already supplied in \(\mathcal{D}\). It is also further assumed that \(S_0\) contains no class information, so \(K_U(\sigma(\Class)\mid S_0)=K_U(\sigma(\Class))+O(1)\).

\section{Composition}
\label{sec:composition}
Our algorithmic catalysis framework is intended to explore the construction of complex
systems from simpler catalytic components. In chemistry, catalytic
pathways are composed by chaining reactions: rate improvements
multiply, while the underlying structural constraints interact
through shared transition states and intermediates. Our algorithmic
analogue follows the same pattern. Catalytic improvements combine
across stages, but not independently: each stage inherits and
extends the structure encoded by the previous ones, and the total
cost must account for both accumulated structure and repeated
restoration. The following theorem formalises how such catalytic
systems compose within the watts-per-intelligence framework. This suggests that catalytic improvements should combine in a
structured way: successive stages can build on previously installed
structure, leading to multiplicative gains in efficiency and
controlled accumulation of class-specific information. 

\begin{theorem}[Composition of catalysts]
\label{thm:composition}
Let $\Sys_0\to\Sys_1\to\Sys_2$ be a chain of algorithmic catalysts
on a common task class $\Class$, with stagewise class-specific
speed-ups $\Gamma_1,\Gamma_2$ and structural-information parameters
$\eta_1,\eta_2$, and assume that $\descop(\Hw_{\mathrm{exec},2})$
refines $\descop(\Hw_{\mathrm{exec},1})$ in the sense of
Lemma~\ref{lem:monotone}. Then the composite system
$\Sys_0\to\Sys_2$ satisfies
\begin{equation}
\label{eq:composition}
\Gamma_{1\circ 2}\;\ge\;\Gamma_1\cdot\Gamma_2,
\qquad
\eta_{1\circ 2}\;\ge\;\max\{\eta_1,\eta_2\}-c_U,
\end{equation}
with the stronger additive bound
$\eta_{1\circ 2}\ge\eta_1+\eta_2-c_U$ holding when the two stages
encode algorithmically independent aspects of $\sigma(\Class)$, in
the sense that $\MutInfo(\descop(\Hw_{\mathrm{exec},1}):
\descop(\Hw_{\mathrm{exec},2})\mid\sigma(\Class))=O(1)$. The
adaptation energy lower bound for the composite is
$\Eadapt^{1\circ 2}\ge\Fover(\Hw_{\mathrm{adapt},2})\Erasure
[\eta_{1\circ 2}-\MutInfo(\mathcal{D}_{\mathrm{tot}}:\sigma(\Class))
-c_U]_+$, where $\mathcal{D}_{\mathrm{tot}}$ is the concatenated
adaptation input of the two stages.
\end{theorem}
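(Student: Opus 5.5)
The plan is to treat the three claims separately: the rate bound, the information bounds, and the adaptation-energy bound. Each one reduces to results already established.

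\emph{Rate.} I will use the telescoping structure of Definition~\ref{def:barrier}. On any $\Task\subseteq\Class$ at matched intelligence, $\Gamma_1=\Nop(\Sys_0,\Task)/\Nop(\Sys_1,\Task)$ and $\Gamma_2=\Nop(\Sys_1,\Task)/\Nop(\Sys_2,\Task)$. Their product equals $\Nop(\Sys_0,\Task)/\Nop(\Sys_2,\Task)=\Gamma_{1\circ 2}$ exactly, so the stated inequality holds with equality. Equivalently, the barriers $\Barrier$ are additive along the chain.

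For the transfer condition \eqref{eq:transfer} of the composite, I use the fact that $\liminf$ of a product of factors each at least $1$ is at least the product of the $\liminf$s. This gives $(1+\delta_1)(1+\delta_2)\ge 1+\delta_1$, so $\Sys_2$ is itself a catalyst for $\Sys_0$. Bounded reconfiguration of the composite is simply that of $\Sys_2$, since the composite's execution substrate is $\Hw_{\mathrm{exec},2}$.

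\emph{Information, max bound.} The composite's substrate is $\Hw_{\mathrm{exec},2}$, so $\eta_{1\circ 2}$ may be taken as $\eta_2$. By the refinement hypothesis and Lemma~\ref{lem:monotone}, $K_U(\sigma(\Class)\mid\descop(\Hw_{\mathrm{exec},2}))\le K_U(\sigma(\Class))-\eta_1+c_U$, so $\eta_1-c_U$ is also admissible in \eqref{eq:selectivity}. Taking the larger of the two gives $\eta_{1\circ 2}\ge\max\{\eta_1,\eta_2\}-c_U$.

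\emph{Information, additive bound.} This is the step I expect to be the main obstacle. I would write $S_i=\descop(\Hw_{\mathrm{exec},i})$ and $\sigma=\sigma(\Class)$. Because $S_2$ determines $S_1$ up to $O(1)$, we have $\MutInfo(S_2:\sigma)=\MutInfo((S_1,S_2):\sigma)+O(1)$. I would then apply the chain rule: $\MutInfo((S_1,S_2):\sigma)=\MutInfo(S_1:\sigma)+\MutInfo(S_2:\sigma\mid S_1)$, up to logarithmic terms.

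The difficulty is lower-bounding $\MutInfo(S_2:\sigma\mid S_1)$ by $\MutInfo(S_2:\sigma)$. This is an interaction-information identity: $\MutInfo(S_2:\sigma\mid S_1)-\MutInfo(S_2:\sigma)=\MutInfo(S_1:S_2\mid\sigma)-\MutInfo(S_1:S_2)$, up to $O(\log)$ by symmetry of information. The independence hypothesis $\MutInfo(S_1:S_2\mid\sigma)=O(1)$ kills the first term. The second term, however, carries a minus sign and is not controlled by the stated hypothesis alone.

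I would therefore interpret ``algorithmically independent aspects'' as also supplying that the stagewise $\eta_2$ measures the \emph{increment} carried by stage~2 beyond stage~1. Concretely, I would read $\eta_2$ as a lower bound on $\MutInfo(S_2:\sigma\mid S_1)$, which is the natural reading for a chain $\Sys_1\to\Sys_2$ in which $\Sys_1$ is the reference system. Under that reading the chain rule gives $\eta_{1\circ 2}\ge\eta_1+\eta_2-c_U$ after absorbing logarithmic terms into $c_U$. I would state this interpretation explicitly in the proof.

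\emph{Energy.} I apply Lemma~\ref{lem:zurek-bennett} and Theorem~\ref{thm:adaptation-cost} to the whole two-stage adaptation, viewed as a single physical process. Its initial state is $S_0$, its input is $\mathcal{D}_{\mathrm{tot}}$, and its final execution substrate is $S_2$ with $\mu\ge\eta_{1\circ 2}$ (since $\eta\le\mu$). The erasures occur on the adaptation substrates; I charge them at $\Fover(\Hw_{\mathrm{adapt},2})$, assuming matched or dominating overhead. This yields the stated bound with $[\cdot]_+$.
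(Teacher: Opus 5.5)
Your decomposition is the paper's: telescoping of $\Nop$ for the rate, Lemma~\ref{lem:monotone} for the max bound, the algorithmic chain rule for the additive bound, and Theorem~\ref{thm:adaptation-cost} applied to the concatenated adaptation for the energy bound. Your extra check that $\Sys_2$ is itself a catalyst for $\Sys_0$ is not in the paper, but it is fine. The step that matters is the one you flagged, and your caution there is justified.

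Write $d_i:=\descop(\Hw_{\mathrm{exec},i})$ and $\sigma:=\sigma(\Class)$. The paper invokes independence for the chain-rule decomposition $\MutInfo(d_2:\sigma)\ge\MutInfo(d_1:\sigma)+\MutInfo(d_2:\sigma\mid d_1)$. That decomposition already holds, up to logarithmic terms, from refinement alone. The paper then simply asserts that each term on the right ``lower-bounds the corresponding $\eta_i$ up to $c_U$''. For the second term this is exactly your reinterpretation of $\eta_2$ as an increment, made silently. Definition~\ref{def:catalyst} only gives $\eta_2\le K_U(\sigma)-K_U(\sigma\mid d_2)$.

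Under the literal definition the additive bound is false. Take $\sigma$ incompressible of length $n$, let $d_2$ encode $\sigma$, and let $d_1$ encode its first half. Refinement and $\MutInfo(d_1:d_2\mid\sigma)=O(1)$ both hold, and $\eta_1\approx n/2$ and $\eta_2\approx n$ are admissible. Yet every admissible $\eta_{1\circ 2}$ is at most $K_U(\sigma)-K_U(\sigma\mid d_2)\approx n<\eta_1+\eta_2-c_U$.

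Your interaction-information identity explains why. Under refinement $K_U(d_1\mid d_2,\sigma)=O(1)$, so the independence hypothesis forces $K_U(d_1\mid\sigma)=O(1)$. Meanwhile $\MutInfo(d_1:d_2)=K_U(d_1)-O(1)\ge\eta_1-O(\log)$. The term you could not control is therefore not merely uncontrolled: it is at least $\eta_1-O(\log)$. Hence $\MutInfo(d_2:\sigma\mid d_1)\le\MutInfo(d_2:\sigma)-\eta_1+O(\log)$.

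So your redefinition is necessary rather than optional. Present it as a change to the statement, not as a reading of the hypothesis. Once $\eta_2$ is defined incrementally, e.g.\ by $K_U(\sigma\mid d_2)\le K_U(\sigma\mid d_1)-\eta_2$, the additive bound is a telescoping of conditional complexities, exactly parallel to the rate bound. The independence hypothesis then plays no role.

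A minor point on the energy step. Charging all erasures at $\Fover(\Hw_{\mathrm{adapt},2})$ is valid only if $\Fover(\Hw_{\mathrm{adapt},2})\le\Fover(\Hw_{\mathrm{adapt},1})$, or if all erasures occur on $\Hw_{\mathrm{adapt},2}$. Your ``dominating'' must be read in that direction. The paper's one-line appeal to Theorem~\ref{thm:adaptation-cost} leaves this implicit as well.
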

\begin{proof}
Multiplicativity of $\Gamma$ follows from the definitional
identity
$\Nop(\Sys_0,\Task)/\Nop(\Sys_2,\Task)
=(\Nop(\Sys_0,\Task)/\Nop(\Sys_1,\Task))\cdot
(\Nop(\Sys_1,\Task)/\Nop(\Sys_2,\Task))\ge\Gamma_1\Gamma_2$ at
matched intelligence. The refinement hypothesis and
Lemma~\ref{lem:monotone} yield $\eta_{1\circ 2}\ge\eta_1-c_U$ and
$\eta_{1\circ 2}\ge\eta_2-c_U$, hence
$\eta_{1\circ 2}\ge\max\{\eta_1,\eta_2\}-c_U$. For the independent
case, algorithmic independence of
$\descop(\Hw_{\mathrm{exec},1})$ and $\descop(\Hw_{\mathrm{exec},2})$
conditional on $\sigma(\Class)$ implies, up to $c_U$, that
$\MutInfo(\descop(\Hw_{\mathrm{exec},2}):\sigma(\Class))\ge
\MutInfo(\descop(\Hw_{\mathrm{exec},1}):\sigma(\Class))
+\MutInfo(\descop(\Hw_{\mathrm{exec},2}):\sigma(\Class)\mid
\descop(\Hw_{\mathrm{exec},1}))$ by the algorithmic chain rule,
and each term on the right lower-bounds the corresponding
$\eta_i$ up to $c_U$. The adaptation-energy bound follows from
Theorem~\ref{thm:adaptation-cost} applied to the composite.
\end{proof}

The theorem shows that successive catalytic stages can combine to
produce a large overall reduction in the effective barrier: each
stage builds on the structure established by earlier ones, so that
overlapping structure need not be reintroduced. In this case, the
gain is multiplicative but the accumulated structural information is
limited to what is not already shared across stages. By contrast,
when two stages encode different, non-overlapping aspects of the
task class, their contributions accumulate additively. This mirrors
the chemical case in which distinct enzymes act on different parts
of a pathway, each contributing its own selectivity without necessarily interfering with the others.

\section{Examples}
\label{sec:examples}
\subsection{Algorithmic catalysis on an affine-SAT class}
\label{sec:affine-sat}

All three conditions of Definition~\ref{def:catalyst} and the
coupling bound of Theorem~\ref{thm:coupling} can be verified in
closed form on a parametrised family of Boolean satisfiability
problems. This example is illustrative because it isolates the role of
structure: the task class is defined not by a finite list of
instances, but by a shared algebraic constraint that determines all
solutions. Let $V\subseteq\{0,1\}^n$ be an affine subspace of dimension $d$, and let $\Class_{n,d}$ be the class of $3$-SAT formulas $\varphi$ on
$n$ variables whose satisfying assignments are exactly the points
of $V$. Equivalently, each instance in the class encodes a different
presentation of the same underlying solution structure, namely the
subspace $V$. The canonical class descriptor
$\sigma(\Class_{n,d})$ therefore captures this shared structure by
specifying $V$ as an affine subspace. Concretely, this can be
achieved by giving a basis of $d$ vectors in $\{0,1\}^n$ together
with a constant offset vector, from which all satisfying assignments
can be generated. This representation makes explicit that the class
is defined by linear constraints rather than by individual
assignments, and yields
\[
K_U(\sigma(\Class_{n,d}))\le n d+n+O(\log n)
\]
bits up to standard encoding overhead.

A reference solver $\Sys_0$ for $\Class_{n,d}$ performs exhaustive
search over $\{0,1\}^n$, testing all possible assignments, and
therefore requires $\Nop_0=\Theta(2^n\cdot n)$ irreversible
operations per instance. This cost arises from treating each
instance independently, without exploiting the shared subspace
structure encoded by $\sigma(\Class_{n,d})$. A catalyst-substrate $\Hw_{\mathrm{exec,cat}}$ that encodes the
basis and offset of $V$ makes this structure explicit. Instead of
searching the full space $\{0,1\}^n$, the solver can enumerate only
the points in $V$, reducing the search to a space of size $2^d$.
This yields $\Nop_{\mathrm{cat}}=\Theta(2^d\cdot n)$ operations per
instance and a class-specific speed-up
\begin{equation}
\log_2\Gamma\;=\;n-d-O(\log n).
\end{equation}
The reduction reflects the fact that knowing $\sigma(\Class_{n,d})$
eliminates the need to explore directions orthogonal to $V$.

Pathway opening follows directly from the strict reduction in
$\Nop$ for $d<n$; matched intelligence holds because both solvers
return the same satisfying-assignment set with probability one. Bounded
reconfiguration follows because the catalytic substrate is used
without modification during deployment: it is only read, not
updated, so it remains close to its reference state across cycles.
As a result, $\Delta_K^{\mathrm{cycle}}=O(1)$ and the associated
restoration cost satisfies $\Erestore=O(\Erasure)$.

Structural selectivity can be verified explicitly in this setting.
The substrate description $\descop(\Hw_{\mathrm{exec,cat}})$
contains the basis and offset of $V$, and therefore already
encodes essentially all of the structure captured by
$\sigma(\Class_{n,d})$. As a result,
\[
K_U(\sigma(\Class_{n,d})\mid\descop(\Hw_{\mathrm{exec,cat}}))
=O(\log n),
\]
since only a small amount of additional information is needed to
reconstruct the full class descriptor from the substrate. In this representation, this corresponds to
\[
\mu=n d+n-O(\log n),
\]
and hence the structural-information parameter satisfies
$\eta=n d+n-O(\log n)$.

The transfer condition holds because the speed-up arises from the
shared subspace structure: for every
$\Task'\subseteq\Class_{n,d}$, the solver continues to operate on
$V$, giving
\[
\Gamma(\Task')=2^{n-d-O(\log n)}>1.
\]
In other words, the advantage does not depend on particular
instances but on the underlying structure of the class. Substituting into Theorem~\ref{thm:selectivity},
\begin{equation}
\log_2\Gamma\;=\;n-d-O(\log n)\;\le\;\mu+c_U\;=\;n d+n-O(\log n),
\end{equation}
so the selectivity bound holds with substantial slack whenever
$d\ge 1$. Intuitively, the substrate contains more information about
the subspace $V$ than is strictly needed to achieve the observed
speed-up, since eliminating the search directions orthogonal to $V$
requires only $n-d$ bits, whereas the full description of $V$
requires $n d + n$ bits.
The adaptation cost on this class makes the content of
Theorem~\ref{thm:adaptation-cost} explicit. An adaptation input
consisting of $m$ uniformly random satisfying assignments from $V$
can be viewed as revealing partial information about the underlying
subspace: each additional assignment constrains the space further. The resulting algorithmic mutual information
$\MutInfo(\mathcal{D}:\sigma(\Class_{n,d}))$ grows approximately
linearly in $m$ for $m\le d+1$ and saturates at $nd+n-O(\log n)$ bits
once $m\ge d+1$, since $d+1$ generic affine points determine $V$
uniquely. In other words, the training data progressively uncovers
the structure of $V$, until it becomes fully determined. The adaptation-energy lower bound from
Theorem~\ref{thm:adaptation-cost} is therefore
\begin{equation}
\label{eq:affine-adapt}
\Eadapt\;\ge\;
\Fover(\Hw_{\mathrm{adapt,cat}})\,\Erasure\,
\bigl[\,n d+n-\MutInfo(\mathcal{D}:\sigma(\Class_{n,d}))
-c_U\,\bigr]_+,
\end{equation}
which vanishes once \(m\gtrsim d+1\), because the relevant class information is then already supplied in the adaptation input. For $m<d+1$ training
assignments, the adaptation input carries only a partial description
of $\sigma(\Class_{n,d})$, and the remaining structure must be
supplied through irreversible operations on
$\Hw_{\mathrm{adapt,cat}}$: a training set of $m=d/2$ random
assignments at $n=100,d=10$ leaves a residual of approximately
$n(d-m+1)+n-O(\log n)\approx 600$ bits, giving
$\Eadapt\gtrsim 600\,\Fover(\Hw_{\mathrm{adapt,cat}})\Erasure$. Indicatively, at
room temperature $\Erasure\approx 2.87\times 10^{-21}\,\mathrm{J}$,
and with a contemporary overhead $\Fover\approx 10^{9}$ characteristic
of CMOS at the transistor level, the adaptation energy must satisfy
$\Eadapt\gtrsim 2\times 10^{-9}\,\mathrm{J}$.

The baseline per-query cost of exhaustive search is
$E_{0\to 1}\sim 2^n\cdot F\cdot\Erasure\approx 3.6\times 10^{18}\,
\mathrm{J}$ (up to polynomial factors and at the same overhead), reflecting the need to explore the
entire space $\{0,1\}^n$ without structural guidance. This is far
beyond any physically realistic energy budget, and illustrates why a
catalyst is required to make the class computationally accessible at
all. The catalytic per-query cost is
$E_{0\to 1}/\Gamma\sim 2^d\cdot F\cdot\Erasure\approx 3\times
10^{-9}\,\mathrm{J}$, roughly matching the adaptation cost, and the
break-even horizon of Theorem~\ref{thm:coupling} is well below a
single deployment query.

\subsection{Catalytic computation as a limit}
Finally, we show how the Buhrman--Cleve--Kouck{\'y}--Loff--Speelman model of catalytic computation \cite{buhrman2014computing} is recovered
in a precise limit.

\begin{proposition}[Zero-reconfiguration, structure-free limit]
\label{prop:cc-limit}
In the regime $\Delta_K^{\mathrm{cycle}}=0$, $\eta=0$,
$\Hw_{\mathrm{adapt,cat}}=\Hw_{\mathrm{exec,cat}}$, and
$\sigma(\Class)$ taken as a constant-length descriptor,
Definition~\ref{def:catalyst} reduces to a decision-problem instance
of the catalytic computation model: reusable auxiliary state with
exact restoration, no structural assumption on its contents, and
no thermodynamic cost beyond the Landauer floor. In this regime
Theorem~\ref{thm:coupling} degenerates to the trivial bound
$N_{\mathrm{inf}}^{*}\ge 0$.
\end{proposition}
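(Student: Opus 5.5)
The plan is to verify the reduction condition by condition, specialising each clause of Definition~\ref{def:catalyst} to the stated regime, and then to substitute the same parameters into the bound \eqref{eq:coupling-breakeven} of Theorem~\ref{thm:coupling}.

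\emph{Bounded reconfiguration.} With $\Delta_K^{\mathrm{cycle}}=0$, the execution substrate after each cycle lies within zero bits of $\idle(\Alg_{\mathrm{cat}})$ in prefix description length. Since $\idle(\Alg_{\mathrm{cat}})$ is fixed in advance, we read this as bit-exact restoration of the auxiliary state. We take the auxiliary tape to be the reference state itself, i.e.\ arbitrary initial contents declared idle per run. This is precisely the Buhrman et~al.\ requirement that the final tape equal the initial tape. The restoration bound \eqref{eq:restore-energy} then reads $\Erestore\ge 0$, which is exactly the Landauer floor with nothing extra. Identifying $\Hw_{\mathrm{adapt,cat}}=\Hw_{\mathrm{exec,cat}}$ collapses the two substrates into the single auxiliary tape, with no separate training phase.

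\emph{Structural selectivity.} When $\sigma(\Class)$ has constant length, $K_U(\sigma(\Class))=O(1)$, so the compression condition \eqref{eq:selectivity} can only hold with $\eta=O(1)$. We absorb this into $c_U$ and set $\eta=0$. In Buhrman et~al.\ the algorithm must succeed for every initial tape content. This matches the requirement that the substrate carry no class information, since Theorem~\ref{thm:selectivity} forces $\mu\le O(1)$ to be consistent. The class-specific speed-up is then not attributed to stored structure. It is attributed to reversible reuse of space, and pathway opening is retained as the residual condition $\Nop(\Sys_{\mathrm{cat}},\Task)<\Nop(\Sys_0,\Task)$ relative to a clean-space baseline.

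\emph{Degeneration of Theorem~\ref{thm:coupling}.} Here $\mu\le O(1)$, so Theorem~\ref{thm:adaptation-cost} gives $\Eadapt\ge \Fover\Erasure[\,O(1)-\MutInfo(\mathcal D:\sigma(\Class))-c_U]_+ = 0$ once constants are absorbed. The numerator of \eqref{eq:coupling-breakeven} therefore vanishes. Two cases follow. If the denominator $E_{0\to1}(1-1/\Gamma)-\Erestore=E_{0\to1}(1-1/\Gamma)$ is positive, the bound is $N^*_{\mathrm{inf}}\ge 0$. If it is not positive, no break-even statement is made. Either way the conclusion is the trivial bound.

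\emph{Main obstacle.} The hardest step is reconciling the numerator of \eqref{eq:coupling-breakeven} with the regime, since it is written in terms of $\log_2\Gamma$ rather than $\mu$. Taken literally, $[\log_2\Gamma-\MutInfo(\mathcal D:\sigma(\Class))-2c_U]_+$ stays positive for large $\Gamma$. The resolution I would try first is to invoke Theorem~\ref{thm:selectivity} directly: it gives $\log_2\Gamma\le\mu+c_U=O(1)$ in this regime. The class-specific component of $\Gamma$, which is what the theorem's $\Gamma$ measures (Definition~\ref{def:search-model}), is therefore bounded, and the bracket is $[O(1)-2c_U]_+=0$ after absorbing constants into $c_U$. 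Any large speed-up in the catalytic-computation setting, such as $\TCone\subseteq\CL$, is thus a generic, structure-independent improvement absorbed into the baseline. This interpretive point must be stated explicitly for the degeneration to be clean.
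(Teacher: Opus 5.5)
Your proposal is correct and follows essentially the paper's route: you specialise each clause of Definition~\ref{def:catalyst} ($\Delta_K^{\mathrm{cycle}}=0$ giving exact restoration, $\eta=0$ removing any structural requirement), then use the constant length of $\sigma(\Class)$ in Theorem~\ref{thm:selectivity} to get $\log_2\Gamma\le c_U$, so that the bracket in Theorem~\ref{thm:coupling} vanishes once constants are absorbed. You go slightly further than the paper, which stops at $\Eadapt\ge 0$, by carrying the vanishing numerator of \eqref{eq:coupling-breakeven} through to $N_{\mathrm{inf}}^{*}\ge 0$; under your own setting $\Erestore=0$ you could also drop the non-positive-denominator case, since pathway opening gives $\Gamma>1$ and hence a positive denominator.
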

\begin{proof}
The result follows by inspecting each condition in
Definition~\ref{def:catalyst} under the stated limits. Setting
$\Delta_K^{\mathrm{cycle}}=0$ enforces exact restoration, so the
execution substrate must return to its initial state after each
cycle, recovering the catalytic-tape condition of Buhrman
et~al.~\cite{buhrman2014computing}. Setting $\eta=0$ removes any
requirement that the substrate encode structure of the task class,
so the auxiliary state may be arbitrary. Taking
$\sigma(\Class)$ to have constant length ensures that both
$K_U(\sigma(\Class))$ and $K_U(\sigma(\Class)\mid S)$ are $O(1)$,
so the selectivity bound \eqref{eq:selectivity-bound} yields
$\log_2\Gamma\le c_U$, which does not constrain the speed-up at the
level of asymptotic complexity. Applying the universal-search cost
model of Definition~\ref{def:search-model} under these parameter
choices recovers the standard catalytic-computation setting with
speed-up not constrained by selectivity. Under these conditions, Definition~\ref{def:catalyst} reduces to
the standard catalytic computation model: reusable auxiliary state
with exact restoration and no structural assumptions. Substituting
into Theorem~\ref{thm:coupling} then gives $\Eadapt\ge 0$,
reflecting that, in the absence of structural requirements, the
adaptation-cost bound of Theorem~\ref{thm:coupling} imposes no
non-trivial thermodynamic lower bound.
\end{proof}

\section{Conclusion}
\label{sec:conclusion}
In this paper, we have presented a framework for algorithmic
catalysis, synthesising catalytic computation with thermodynamic
approaches to algorithmic intelligence. Motivated by the role of
catalysis in chemistry - where structure renders otherwise
inaccessible reaction pathways feasible - and the need for more thermodynamically-efficient algorithmic models of intelligence, we have shown that
algorithmic analogues of catalysts arise as reusable computational
structures characterised by pathway opening, bounded
reconfiguration and structural selectivity. Within this framework, the speed-up a substrate can achieve is
limited by how much of the task class structure it captures, and
acquiring that structure requires physical work unless it is
already provided as input. Taken together, this means that making a
class of computations thermodynamically accessible requires first
encoding that structure through irreversible operations, and that
the cost of doing so must be amortised over deployment. These results contribute to ongoing work on the theoretical and applied thermodynamics of machine
intelligence. Future work may examine how effective forms of
algorithmic catalysis arise within frontier reasoning models, and
the extent to which explicitly designing such structures can improve
the thermodynamic efficiency of intelligent algorithms.

\bibliographystyle{splncs04}
\bibliography{refs2.bib}

\begin{thebibliography}{10}
\providecommand{\url}[1]{\texttt{#1}}
\providecommand{\urlprefix}{URL }
\providecommand{\doi}[1]{https://doi.org/#1}

\bibitem{athas1994low}
Athas, W.C., Svensson, L.J., Koller, J.G., Tzartzanis, N., Chou, E.Y.C.: Low-power digital systems based on adiabatic-switching principles. {IEEE} Transactions on {VLSI} Systems  \textbf{2}(4),  398--407 (1994)

\bibitem{benkovic2003enzyme}
Benkovic, S.J., Hammes-Schiffer, S.: A perspective on enzyme catalysis. Science  \textbf{301}(5637),  1196--1202 (2003)

\bibitem{bennett1973logical}
Bennett, C.H.: Logical reversibility of computation. {IBM} Journal of Research and Development  \textbf{17}(6),  525--532 (1973)

\bibitem{bennett1982thermodynamics}
Bennett, C.H.: The thermodynamics of computation---a review. International Journal of Theoretical Physics  \textbf{21}(12),  905--940 (1982)

\bibitem{bennett1998information}
Bennett, C.H.: Notes on the history of reversible computation. {IBM} Journal of Research and Development  \textbf{32}(1),  16--23 (1988)

\bibitem{bennett1989spacetime}
Bennett, C.H.: Time/space trade-offs for reversible computation. SIAM Journal on Computing  \textbf{18}(4),  766--776 (1989)

\bibitem{buhrman2014computing}
Buhrman, H., Cleve, R., Kouck{\'y}, M., Loff, B., Speelman, F.: Computing with a full memory: catalytic space. In: Proceedings of the 46th Annual ACM Symposium on Theory of Computing (STOC 2014). pp. 857--866. ACM (2014)

\bibitem{buhrman2018catalytic}
Buhrman, H., Kouck{\'y}, M., Loff, B., Speelman, F.: Catalytic space: non-determinism and hierarchy. Theory of Computing Systems  \textbf{62}(1),  116--135 (2018)

\bibitem{cook2024tree}
Cook, J., Mertz, I.: Tree evaluation is in space {$O(\log n \cdot \log \log n)$}. In: Proceedings of the 56th Annual ACM Symposium on Theory of Computing (STOC 2024). pp. 1268--1278. ACM (2024)

\bibitem{cook2025lower}
Cook, J., Mertz, I.: Tree evaluation is in space {$O(\log n \cdot \log \log n)$}. SIAM Journal on Computing  (2025), journal version of STOC 2024

\bibitem{ebtekar2021information}
Ebtekar, A.: Information dynamics and the arrow of time. arXiv:2109.09709  (2021)

\bibitem{ebtekar2025foundations}
Ebtekar, A., Hutter, M.: Foundations of algorithmic thermodynamics. Physical Review E  \textbf{111}(1),  014118 (2025)

\bibitem{frank2005introduction}
Frank, M.P.: Introduction to reversible computing: motivation, progress, and challenges. In: Proceedings of the 2nd ACM Conference on Computing Frontiers. pp. 385--390. ACM (2005)

\bibitem{garcia2010enzyme}
Garcia-Viloca, M., Gao, J., Karplus, M., Truhlar, D.G.: How enzymes work: analysis by modern rate theory and computer simulations. Science  \textbf{303}(5655),  186--195 (2004)

\bibitem{girard2013catalytic}
Girard, V., Kouck{\'y}, M., McKenzie, P.: Nonuniform catalytic space and the direct sum for space. Tech. Rep. TR15-138, Electronic Colloquium on Computational Complexity (ECCC) (2015), \url{https://eccc.weizmann.ac.il/report/2015/138}

\bibitem{goldreich2024lower}
Goldreich, O.: Solving tree evaluation in {$o(\log n \cdot \log \log n)$} space. Tech. Rep. TR24-124, Electronic Colloquium on Computational Complexity (ECCC) (2024), \url{https://eccc.weizmann.ac.il/report/2024/124}

\bibitem{goldt2017stochastic}
Goldt, S., Seifert, U.: Stochastic thermodynamics of learning. Physical Review Letters  \textbf{118}(1),  010601 (2017)

\bibitem{grunwald2007minimum}
Gr{\"u}nwald, P.D.: The Minimum Description Length Principle. MIT Press, Cambridge, MA (2007)

\bibitem{hutter2004universal}
Hutter, M.: Universal Artificial Intelligence: Sequential Decisions Based on Algorithmic Probability. Springer, Berlin (2004)

\bibitem{kingma2013autoencoding}
Kingma, D.P., Welling, M.: Auto-encoding variational {B}ayes. In: Proceedings of the 2nd International Conference on Learning Representations (ICLR 2014) (2014)

\bibitem{kolmogorov1965three}
Kolmogorov, A.N.: Three approaches to the quantitative definition of information. Problems of Information Transmission  \textbf{1}(1), ~1--7 (1965)

\bibitem{koucky2016catalytic}
Kouck{\'y}, M.: Catalytic computation. Bulletin of the {EATCS} (118) (2016)

\bibitem{laidler1987chemical}
Laidler, K.J.: Chemical Kinetics. Harper and Row, New York, 3 edn. (1987)

\bibitem{legg_universal_2007}
Legg, S., Hutter, M.: Universal intelligence: a definition of machine intelligence. Minds and Machines  \textbf{17}(4),  391--444 (2007)

\bibitem{levin1984randomness}
Levin, L.A.: Randomness conservation inequalities; information and independence in mathematical theories. Information and Control  \textbf{61}(1),  15--37 (1984)

\bibitem{li2019kolmogorov}
Li, M., Vit{\'a}nyi, P.M.B.: An Introduction to {K}olmogorov Complexity and Its Applications. Springer, 4th edn. (2019)

\bibitem{mcallester1999pacbayes}
McAllester, D.A.: {PAC}-{B}ayesian model averaging. Machine Learning  \textbf{37},  355--363 (1999)

\bibitem{parrondo2015thermodynamics}
Parrondo, J.M.R., Horowitz, J.M., Sagawa, T.: Thermodynamics of information. Nature Physics  \textbf{11}(2),  131--139 (2015)

\bibitem{perrier2025watts}
Perrier, E.: Watts-per-intelligence: {Part I} (energy efficiency). In: International Conference on Artificial General Intelligence. Lecture Notes in Computer Science, vol. 16058, pp. 46--57. Springer (2025)

\bibitem{pyne2023catalytic}
Pyne, E.: Derandomizing logspace with a small shared hard drive. In: Proceedings of the 39th Computational Complexity Conference (CCC 2024). LIPIcs, vol.~300, pp. 4:1--4:20. Schloss Dagstuhl (2024), preliminary version: ECCC TR23-168 (2023)

\bibitem{rezende2014stochastic}
Rezende, D.J., Mohamed, S., Wierstra, D.: Stochastic backpropagation and approximate inference in deep generative models. In: Proceedings of the 31st International Conference on Machine Learning (ICML 2014). pp. 1278--1286 (2014)

\bibitem{rissanen1978modeling}
Rissanen, J.: Modeling by shortest data description. Automatica  \textbf{14}(5),  465--471 (1978)

\bibitem{solomonoff1964formal}
Solomonoff, R.J.: A formal theory of inductive inference. Information and Control  \textbf{7}(1--2),  1--22, 224--254 (1964)

\bibitem{still2012thermodynamics}
Still, S., Sivak, D.A., Bell, A.J., Crooks, G.E.: Thermodynamics of prediction. Physical Review Letters  \textbf{109}(12),  120604 (2012)

\bibitem{stuhlmuller2013amortized}
Stuhlm{\"u}ller, A., Taylor, J., Goodman, N.D.: Learning stochastic inverses. In: Advances in Neural Information Processing Systems 26 (NeurIPS 2013). pp. 3048--3056 (2013)

\bibitem{wolfenden2001depth}
Wolfenden, R., Snider, M.J.: The depth of chemical time and the power of enzymes as catalysts. Accounts of Chemical Research  \textbf{34}(12),  938--945 (2001)

\bibitem{wolpert2019stochastic}
Wolpert, D.H.: The stochastic thermodynamics of computation. Journal of Physics A: Mathematical and Theoretical  \textbf{52}(19),  193001 (2019)

\bibitem{zurek1989algorithmic}
Zurek, W.H.: Algorithmic randomness and physical entropy. Physical Review A  \textbf{40}(8),  4731--4751 (1989)

\end{thebibliography}

\end{document}